\newtheorem{remark}{Remark}
\newtheorem{theorem}{Theorem}
\newtheorem{lemma}{Lemma}
\newtheorem{corollary}{Corollary}
\newenvironment{proof}{\indent \indent \it Proof:}{\hfill $\blacksquare$\par}
\def\ScaleIfNeeded{%
\ifdim\Gin@nat@width>\linewidth \linewidth \else \Gin@nat@width
\fi } \makeatother
\begin{document}

\title{Ergodic Rate Analysis of STAR-RIS Aided \\NOMA Systems}

\author{
Boqun~Zhao,~\IEEEmembership{Graduate Student Member,~IEEE},
Chao~Zhang,~\IEEEmembership{Graduate Student Member,~IEEE},
Wenqiang~Yi,~\IEEEmembership{Member,~IEEE} and
Yuanwei~Liu,~\IEEEmembership{Senior Member,~IEEE}

\thanks{B. Zhao is with Department of Electrical and Electronic Engineering, Imperial College London, London, UK (email:burch.zhao21@imperial.ac.uk).}
\thanks{C. Zhang, W. Yi and Y. Liu are with the School of Electronic Engineering and Computer Science, Queen Mary University of London, London, UK (email:\{chao.zhang, w.yi, yuanwei.liu\}@qmul.ac.uk).}
}

\maketitle
\begin{abstract}
  This letter analyzes the ergodic rates of a simultaneously transmitting and reflecting reconfigurable intelligent surface (STAR-RIS) aided non-orthogonal multiple access (NOMA) system, where the direct links from the base station to cell-edge users are non-line-of-sight due to obstacles, and STAR-RIS is used to provide line-of-sight links to these cell-edge users. By fitting the distribution of the composite channel power gain to a gamma distribution, we derive the closed-form expressions of ergodic rates and high signal-to-noise ratio (SNR) slopes for cell-edge users. Numerical results reveal that 1) the ergodic rates increase with the number of STAR-RIS elements, and the high SNR slopes are fixed as constants; 2) STAR-RIS aided NOMA systems achieve higher ergodic rates than conventional RIS aided NOMA systems.
\end{abstract}

\begin{IEEEkeywords}
STAR-RIS, NOMA, ergodic rate, high SNR slope 
\end{IEEEkeywords}

\section{Introduction}
Due to the capability of supporting high throughput and massive connectivity, non-orthogonal multiple access (NOMA) has been recognized as a promising multiple access technique for next generation wireless networks \cite{NOMA}. However, cell-edge NOMA users experience low quality of service (QoS) due to additional intra-NOMA-cluster interference. Additionally, the transmission links from base stations (BSs) to cell-edge users prone to be non-line-of-sight (NLoS) due to obstacles, which further impairs their QoS. To solve this issue, an emerging technique, named reconfigurable intelligent surfaces (RISs), is employed to enhance the performance of cell-edge users via controlling their surrounding wireless propagation environment \cite{NOMA-ris}. 
  
However, conventional RISs with opaque substrates may create new blind spots behind the RISs. Recently proposed simultaneously transmitting and reflecting reconfigurable intelligent surfaces (STAR-RISs) have the ability to overcome this limitation. In contrast to conventional RISs, STAR-RISs can reflect and refract signals to users on both sides of their surfaces, thereby enlarging the serving area from $180^{\circ }$ to $360^{\circ }$ coverage \cite{refraction}. Hence, applying STAR-RISs to NOMA systems is able to achieve full-space coverage without any blind spots, which brings STAR-RIS aided NOMA a high research value. 
    \begin{figure}[!htb]
    \centering
    \includegraphics[width= 3.4in]{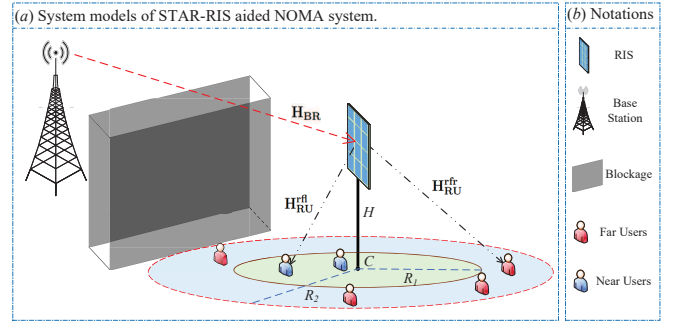}
    \caption{System models of the STAR-RIS-aided NOMA system: a) System channels and user deployment; and b) Notations.}
    \label{Fig1}
    \end{figure}
However, though there are some existing works regarding the performance of STAR-RIS aided NOMA systems \cite{ios,Ergodic,gamma,R4}, the mathematical analyses of ergodic rate are only performed in \cite{Ergodic} and \cite{gamma}, and neither derives the closed-form expressions of the ergodic rates for both NOMA pairs. In addition, the models proposed in \cite{Ergodic} and \cite{gamma} are both based on Rician fading channels suitable for an ideal communication environment, and the line-of-sight (LoS) links from BS to users are available in the models, which is impractical for the cell-edge area.
  
Motivated by the aforementioned research gap, we investigate the closed-form ergodic rate expressions of a STAR-RIS aided NOMA system without LoS links from the BS to users. To characterize the statistical property of actual rapid fading channels, a more practical channel, e.g., Nakagami-$m$ fading, is considered. The channel model for the system is mimicked with a gamma distribution, which is friendly for multi-cell performance analysis \cite{gamma}. The main contributions are summarized as follows: 1) The closed-form expressions of ergodic rates for cell-edge users are derived; 2) The approximated ergodic rates in high signal-to-noise ratio (SNR) region are calculated to investigate the high SNR slopes; 3) Numerical results validate that the ergodic rates of STAR-RIS aided NOMA systems outperform conventional RIS aided NOMA systems.

\section{System model}
By considering a cell-edge area where the transmission links from BS to users are NLoS, we investigate a STAR-RIS aided NOMA system, including a BS, randomly deployed users, and a STAR-RIS. As shown in Fig. \ref{Fig1}, we denote the channel as ${\mathbf{H}}_{\mathbf{BR}}$ for the link from the BS to the STAR-RIS, ${\mathbf{H}}_{{\mathbf{RU}}}^{\mathbf{rfl}}$ and ${\mathbf{H}}_{{\mathbf{RU}}}^{\mathbf{rfr}}$ for the link from the STAR-RIS to the reflecting user and the refracting user, respectively. Detailed system model designs are presented in the following.

\subsection{Deployment}
As shown in Fig. \ref{Fig1}, the BS is fixed and the direct links from the BS to users are blocked. We assume the users are deployed in a circle area with the center point denoted as $C$. The STAR-RIS is above the center $C$ with a height of $H$ meters. Since the distance from the BS to the users is much larger than the upper bound of the near field, the far field is assumed. More specifically, to evaluate the performance for the users with different distances to the STAR-RIS, the users are split into two portions, the near user deployed in the inner circle with a radius $R_1$ meters and the far users deployed in the outer ring with radii $R_2$ and $R_1$ meters. We exploit homogeneous Poisson point processes to model the locations of the users \cite{ppp}. Hence, the near and far users are uniformly distributed within their areas and the probability density functions (PDF) of the distances from a user to the center are derived as ${f_{{d_{near}}}}\left( x \right) = \frac{\partial }{{\partial x}}\frac{{\pi {x^2}}}{{\pi R_1^2}} = \frac{{2x}}{{R_1^2}}$ and ${f_{{d_{far}}}}\left( x \right) = \frac{\partial }{{\partial x}}\frac{{\pi \left( {{x^2} - R_1^2} \right)}}{{\pi \left( {R_2^2 - R_1^2} \right)}} = \frac{{2x}}{{R_2^2 - R_1^2}}$.

\subsection{Protocol and NOMA Designs}
\subsubsection{The Energy Splitting (ES) Protocol}
By utilizing the ES protocol, all the STAR-RIS elements can split the total radiation energy into the refraction and reflection modes. Since the passive beamforming consumes little energy, we assume an ideal scenario where the total energy is exploited. Hence, for the $n^{th} $ element, this protocol is mathematically presented as $\beta _n^{rfl} + \beta _n^{rfr} = 1$, where $\beta _n^{rfl}$ is for the reflecting component and $\beta _n^{rfr}$ is for the refracting component \cite{ios}. To avoid the ES protocol changing the arranged orders of the successive interference cancellation (SIC), we assume $\beta _n^{rfr}=\beta _n^{rfl}=0.5$ for the following analysis.

\subsubsection{NOMA Designs}
To maximise the gain for each user and consider unicast transmission, we assume that a near user and a far user on the different sides of the STAR-RIS are paired in a NOMA pair. When considering the average performance, the near users have better channel quality than far users. The BS will allocate more energy for the far users than the near users. Thus, the near users will utilize the SIC process, while the far users will directly decode their signals. We denote the power allocation coefficients as $a_{near}<a_{far}$ and $a_{near}+a_{far}=1$.

\subsection{Channel Models}
\subsubsection{STAR-RIS Equivalent Channels}
For the better performance, the optimal equal phase configuration for the STAR-RIS is considered. The reflecting and refracting channels from the BS to the user via the STAR-RIS with $N$ elements are presented as 
\begin{align}
 \left| {{h_{rfl}}} \right| = \left| {{\mathbf{H}}{{_{{\mathbf{RU}}}^{\mathbf{rfl}}}^T}{{\mathbf{\Theta }}_{{\mathbf{rfl}}}}{{\mathbf{H}}_{{\mathbf{BR}}}}} \right|,  \left| {{h_{rfr}}} \right| = \left| {{\mathbf{H}}{{_{{\mathbf{RU}}}^{{\mathbf{rfr}}}}^T}{{\mathbf{\Theta }}_{{\mathbf{rfr}}}}{{\mathbf{H}}_{{\mathbf{BR}}}}} \right|,
\end{align}
where ${{\bf{\Theta }}_{{\bf{rfl}}}} = diag\left[ {\sqrt {\beta _1^{rfl}} {e^{j\phi _1^{rfl}}}, \cdots ,\sqrt {\beta _N^{rfl}} {e^{j\phi _N^{rfl}}}} \right]$ and ${{\bf{\Theta }}_{{\bf{rfr}}}} = diag\left[ {\sqrt {\beta _1^{rfr}} {e^{j\phi _1^{rfr}}},\cdots ,\sqrt {\beta _N^{rfr}} {e^{j\phi _N^{rfr}}}} \right]$ are the diagonal phase-shift matrices of the STAR-RIS elements for reflecting and refracting channels, ${\bf{H}}_{{\bf{RU}}}^{{\bf{rfl}}} = {\left[ {h_{RU,1}^{rfl} , \cdots ,h_{RU,N}^{rfl}} \right]^T}$ and ${\bf{H}}_{{\bf{RU}}}^{{\bf{rfr}}} = {\left[ {h_{RU,1}^{rfr}, \cdots ,h_{RU,N}^{rfr}} \right]^T}$ are the channels between the STAR-RIS and the users, and ${{\bf{H}}_{{\bf{BR}}}} = {\left[ {{h_{BR,1}} , \cdots ,{h_{BR,N}}} \right]^T}$ is the channel between the BS and the STAR-RIS.

\subsubsection{Small-Scale Fading Model}
We assume the channels for each STAR-RIS element are independent and identical distributed (i. i. d.) Nakagami-$m$ fading channels\footnote{The results for the spatially correlated channels are evaluated by simulation in Section \uppercase\expandafter{\romannumeral4}. The independent channels can be seen as a special case of the correlated channels when the spatial correlation matrix is identity or $N \to \infty $ \cite{corr}. Therefore, the performance under the i.i.d. channels can be used as a benchmark in the future studies of the spatially correlated channels.}, and multiple elements introduce the composite channel. When $N$ is large, the distribution of the composite channel power gain ${\left| {{h_i}} \right|^2}$, with $i \in \{rfl, rfr\}$, can be fitted as a gamma distribution $\Gamma \left( {k,\theta } \right)$ with the shape parameter and the scale parameter expressed as follows \cite{gamma}:
\begin{equation}
\begin{cases}
k = \frac{{\Gamma {{\left( {m + \frac{1}{2}} \right)}^2}}}{{4\left( {m\Gamma {{\left( m \right)}^2} - \Gamma {{\left( {m + \frac{1}{2}} \right)}^2}} \right)}}N \\
\theta = 4\Omega N - \frac{{4\Omega }}{m}{\left( {\frac{{\Gamma \left( {m + \frac{1}{2}} \right)}}{{\Gamma \left( m \right)}}} \right)^2}N
\end{cases},
\label{equ2}
\end{equation}
where $m$ and $\Omega$ are the shape parameter and the spread parameter of the Nakagami-$m$ distribution.

\subsubsection{Signal-to-Interference-and-Noise Ratio (SINR)}
For this scenario, the SINR or SNR expressions of the near and far users are presented as
\begin{align}
\label{equ3}
{\gamma ^{SIC}_{{near}}} &= \frac{{{P_t}{a_{far}}(d_{near}^2+H^2)^{-\frac{\alpha}{2}}d_{BR}^{-\alpha}{{\left| {h_k} \right|}^2}}}{{{P_t}{a_{near}}(d_{near}^2+H^2)^{-\frac{\alpha}{2}}d_{BR}^{-\alpha}{{\left| {h_k} \right|}^2} + {\sigma ^2}}},\\ 
\label{equ4}
{\gamma _{near}} &= \frac{{P_t}{a_{near}}(d_{near}^2+H^2)^{-\frac{\alpha}{2}}d_{BR}^{-\alpha}{{\left| {h_k} \right|}^2}}{{ {\sigma ^2}}},\\ 
{\gamma _{far}} &= \frac{{{P_t}{a_{far}}(d_{far}^2+H^2)^{-\frac{\alpha}{2}}d_{BR}^{-\alpha}{{\left| {h_j} \right|}^2}}}{{{P_t}{a_{near}}(d_{far}^2+H^2)^{-\frac{\alpha}{2}}d_{BR}^{-\alpha}{{\left| {h_j} \right|}^2} + {\sigma ^2}}},
\end{align}
where $P_{t}$ is the transmit power of users, $\sigma^{2}$ is the variance of additive white Gaussian noise (AWGN) and $\alpha$ is the path loss exponent. When the near user is the reflecting user, we have $k = rfl$ and $j = rfr$. When the near user is the refracting user, we have $k = rfr$ and $j = rfl$.

\section{Ergodic rate analysis}
The ergodic rate that indicates the averaged achievable rate is an essential metric for performance evaluation. In this section, we investigate the ergodic rates for the near and far users of the proposed STAR-RIS aided NOMA system, which are defined as
\begin{align}
 &{R_{near}} = {\mathbb{E}}\left[ {\log \left( {1 + {\gamma _{near}}} \right)} \right]u\left( {\gamma _{near}^{SIC} - \gamma _{th}^{SIC}} \right), \\
 &{R_{far}} = {\mathbb{E}}\left[ {\log \left( {1 + {\gamma _{far}} } \right)} \right],
\end{align}
where $\mathbb{E}\left[  \cdot  \right]$ denotes the expectation calculation, $u\left(t\right)$ is the unit step function defined as $u\left( t \right) = \left\{ {\begin{array}{*{20}{c}}
  \!\!{0,~~t < 0} \\ 
  \!\!{1,~~t \geqslant 0} 
\end{array}} \right.$, and $\gamma _{th}^{SIC}$ is the SIC threshold.

More specifically, we assume the number of STAR-RIS elements is large enough, and with the aid of (2), the distribution of the composite channel power gain is fitted as a gamma distribution $\Gamma \left( {k,\theta } \right)$. Based on this approximation, the expressions of the ergodic rates and high SNR slopes are derived as follows.

\subsection{Ergodic rate of near user}
\begin{theorem}\label{theorem1}
With the aid of the ES protocol, the closed-form expression of the ergodic rate for the near users is derived as
\setlength{\abovedisplayskip}{2pt}
\setlength{\belowdisplayskip}{2pt}
\begin{align}\label{nearER}
  {R_{near}} = {I_1} + {I_2} + {I_3}+ {I_4},   
\end{align}
where $I_1$, $I_2$, $I_3$ and $I_4$ denote as
\begin{align}
{I_1} &= \frac{{\ln \left( {\Xi  + 1} \right)}}{{{R_1}\ln 2}}\sum\limits_{i = 1}^M {\omega _i}\sqrt {1 - {\varepsilon _i}^2} {\Omega _1}\left( {{\varepsilon _i}} \right) \notag\\
&~~~\times \left( {1 - \frac{{\gamma \left( {k,\Xi \Psi \left( {{\Omega _1}\left( {{\varepsilon _i}} \right)} \right)} \right)}}{{\Gamma \left( k \right)}}} \right),\\
{I_2} &=  - \frac{1}{{{R_1}\ln 2}}\sum\limits_{i = 1}^M {\omega _i}\sqrt {1 - {\varepsilon _i}^2} {\Omega _1}\left( {{\varepsilon _i}} \right)\exp \left( {\Psi \left( {{\Omega _1}\left( {{\varepsilon _i}} \right)} \right)} \right)\notag\\
&~~~\times {\text{Ei}}\left( { - \left( {\Xi  + 1} \right)\Psi \left( {{\Omega _1}\left( {{\varepsilon _i}} \right)} \right)} \right),\\
{I_3} &= \frac{1}{{{R_1}\ln 2}}\sum\limits_{i = 1}^M \sum\limits_{n = 1}^{k - 1} {\omega _i}\sqrt {1 - {\varepsilon _i}^2} {\Omega _1}\left( {{\varepsilon _i}} \right) \frac{{{\Psi ^n}\left( {{\Omega _1}\left( {{\varepsilon _i}} \right)} \right)}}{{n!}}\notag\\
&~~~\times {{\left( { - 1} \right)}^{n + 1}}\exp \left( {\Psi \left( {{\Omega _1}\left( {{\varepsilon _i}} \right)} \right)} \right)\notag\\
&~~~\times{\text{Ei}}\left( { - \left( {\Xi  + 1} \right)\Psi \left( {{\Omega _1}\left( {{\varepsilon _i}} \right)} \right)} \right),
\end{align}
\begin{align}
{I_4} &= \frac{1}{{{R_1}\ln 2}}\sum\limits_{i = 1}^M \sum\limits_{n = 1}^{k - 1} \sum\limits_{j = 1}^n {\omega _i}\sqrt {1 - {\varepsilon _i}^2} {\Omega _1}\left( {{\varepsilon _i}} \right)\frac{{{\Psi ^n}\left( {{\Omega _1}\left( {{\varepsilon _i}} \right)} \right)}}{{n!}}\notag\\
&~~~\times\exp \left( {\Psi \left( {{\Omega _1}\left( {{\varepsilon _i}} \right)} \right)} \right) C_n^j{{\left( { - 1} \right)}^{n - j}}{{\left( {\Psi \left( {{\Omega _1}\left( {{\varepsilon _i}} \right)} \right)} \right)}^{ - j}} \notag\\
&~~~\times \Gamma \left( {j,\left( {\Xi  + 1} \right)\Psi \left( {{\Omega _1}\left( {{\varepsilon _i}} \right)} \right)} \right)  ,
\end{align}
where ${\Omega _1}\left( x \right) = \frac{{{R_1}}}{2}\left( {x + 1} \right)$, ${\varepsilon _i} = \cos \left( {\frac{{2i - 1}}{{2M}}\pi } \right)$, ${\omega _i} = \frac{\pi }{M}$, $M$ is the coefficients in Chebyshev-Gauss quadrature, $\Xi  = \frac{{{a_{near}}\gamma _{th}^{SIC}}}{{\left( {{a_{far}} - \gamma _{th}^{SIC}{a_{near}}} \right)}}$, $\Psi \left( y \right)=\frac{{{\sigma ^2}}}{{{a_{near}}{P_t}{{({y^2} + {H^2})}^{ - \frac{\alpha }{2}}}d_{BR}^{ - \alpha }\theta }}$, ${\text{Ei}}\left( x \right) = \int_{ - \infty }^x {\frac{{e^t}}{t}dt}$ is the exponential integral, $\Gamma \left( k \right) = \int_0^\infty  {{t^{k - 1}}{e^{ - t}}dt}$ is the gamma function, $\gamma \left( {k,x} \right) = \int\limits_0^x {{t^{k - 1}}{e^{ - t}}dt}$ is the lower incomplete gamma function and $\Gamma \left( {k,x} \right) = \int_x^\infty  {{t^{k - 1}}{e^{ - t}}dt}$ is the upper incomplete gamma function.
\end{theorem}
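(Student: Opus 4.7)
The plan is to express the ergodic rate as a double integral (inner over the composite channel gain, outer over the random near-user distance $d_{near}$), evaluate the inner integral in closed form via integration by parts combined with the finite-sum expansion of the upper incomplete gamma function, and then handle the outer integral via Gauss--Chebyshev quadrature.

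First I would simplify the SIC indicator. Using the expressions for $\gamma_{near}^{SIC}$ and $\gamma_{near}$ in (3)--(4), the event $\gamma_{near}^{SIC}\ge \gamma_{th}^{SIC}$ is algebraically equivalent to $\gamma_{near}\ge \Xi$, which, writing $|h_k|^2 = \theta w$ with $w\sim\Gamma(k,1)$, becomes $w\ge \Xi\Psi(d_{near})$. Using the PDF $f_{d_{near}}(y) = 2y/R_1^{2}$ from the deployment model, $R_{near}$ reduces to
\begin{align*}
\int_0^{R_1}\frac{2y}{R_1^{2}}\int_{\Xi\Psi(y)}^{\infty}\log_2\!\Bigl(1+\tfrac{w}{\Psi(y)}\Bigr)\frac{w^{k-1}e^{-w}}{\Gamma(k)}\,dw\,dy.
\end{align*}

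For the inner $w$-integral I would apply integration by parts with $u=\log_2(1+w/\Psi)$ and $dv=w^{k-1}e^{-w}/\Gamma(k)\,dw$, giving $v=-\Gamma(k,w)/\Gamma(k)$. The surface term at $w=\Xi\Psi$ equals $\log_2(1+\Xi)\,\Gamma(k,\Xi\Psi)/\Gamma(k)$, which, after rewriting via $\Gamma(k,x)/\Gamma(k)=1-\gamma(k,x)/\Gamma(k)$, is exactly the integrand that becomes $I_1$. For the leftover $\frac{1}{\ln 2}\int_{\Xi\Psi}^{\infty}\frac{\Gamma(k,w)/\Gamma(k)}{\Psi+w}\,dw$, I would use the integer-$k$ identity $\Gamma(k,w)=(k-1)!\,e^{-w}\sum_{n=0}^{k-1} w^n/n!$, reducing the task to evaluating $\int_{\Xi\Psi}^{\infty} e^{-w}w^{n}/(\Psi+w)\,dw$ for $n=0,\ldots,k-1$. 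Substituting $t=\Psi+w$ and binomially expanding $(t-\Psi)^{n}$ splits each term into a $j=0$ piece that returns an exponential integral $-\text{Ei}(-(\Xi+1)\Psi)$ and $j\ge 1$ pieces that return upper incomplete gamma values $\Gamma(j,(\Xi+1)\Psi)$. The $n=0$ contribution of the first kind becomes $I_2$; the $n\ge 1$ contributions of the first kind, with sign $(-1)^{n+1}$ arising from $(-\Psi)^{n}\cdot(-1)$, become $I_3$; and the $j\ge 1$ contributions become the triple-sum $I_4$.

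Finally, the outer integral over $y\in[0,R_1]$ is handled by the linear change of variables $y=\Omega_1(x)=R_1(x+1)/2$ on $x\in[-1,1]$ followed by inserting $\sqrt{1-x^{2}}/\sqrt{1-x^{2}}$ to bring the integral into Chebyshev weight form, and applying $M$-node Chebyshev--Gauss quadrature with nodes $\varepsilon_i$ and weights $\omega_i=\pi/M$. This step produces the common prefactor $\frac{1}{R_1}\sum_i\omega_i\sqrt{1-\varepsilon_i^{2}}\,\Omega_1(\varepsilon_i)(\cdot)$ visible in all four $I_\ell$. The main obstacle is bookkeeping rather than any deep analytical step: one must cleanly separate the $n=0$ piece (feeding $I_2$) from the $n\ge 1$ pieces (feeding $I_3,I_4$), track the signs coming jointly from IBP and from $(-\Psi)^{n-j}$, and factor $\Psi^{n}/n!$ outside the $j$-sum so that the residual $j\ge 1$ terms take exactly the $C_n^{j}(-1)^{n-j}\Psi^{-j}\Gamma(j,(\Xi+1)\Psi)$ shape claimed in $I_4$.
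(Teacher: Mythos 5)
Your proposal is correct and follows essentially the same route as the paper's Appendix~A: the SIC step function is reduced to the threshold $\gamma_{near}\ge\Xi$ (equivalently $w\ge\Xi\Psi(y)$), the integer-$k$ finite-sum form of $\Gamma(k,\cdot)$ splits the remaining integral into the exponential-integral terms ($I_2$, $I_3$) and the upper-incomplete-gamma terms ($I_4$) via the same shift-and-binomial expansion, and the distance average is handled by the identical $y=\Omega_1(x)$ Chebyshev--Gauss step. The only cosmetic difference is that you integrate by parts in the channel-gain variable $w$ (so $I_1$ appears as a boundary term), whereas the paper applies the CCDF identity to $\gamma_{near}$ and splits the outer integral at $x=\Xi$; after the substitution $w=\Psi x$ the two computations coincide term by term.
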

\vspace{0.1cm}
\begin{proof}
\emph{See Appendix~A.}
\end{proof}

\subsection{Ergodic rate of far user}
\begin{theorem}\label{theorem2}
The closed-form expressions of the ergodic rate for the far users is derived as
\begin{align}
{R_{far}} &= \frac{{{a_{far}}}}{{{a_{near}}\left( {R_2^2 - R_1^2} \right)\alpha \ln 2}} \notag\\
&~~~\times \sum\limits_{i = 1}^M \sum\limits_{n = 0}^{k - 1} {\omega _i}\sqrt {1 - {\varepsilon _i}^2} \frac{{\Phi^{ - \frac{2}{\alpha }}\left( {{\Omega _2}\left( {{\varepsilon _i}} \right)} \right)}}{{n!\left( {1 + {\Omega _2}\left( {{\varepsilon _i}} \right)} \right)}}\notag\\
&~~~\times \left( \gamma \left( {n + \frac{2}{\alpha },{\Phi}\left( {{\Omega _2}\left( {{\varepsilon _i}} \right)} \right){{({R_2}^2 + {H^2})}^{\frac{\alpha }{2}}}} \right) 
\notag\right.
\\
\phantom{=\;\;}
&~~~\left.- \gamma \left( {n + \frac{2}{\alpha },{\Phi}\left( {{\Omega _2}\left( {{\varepsilon _i}} \right)} \right){{({R_1}^2 + {H^2})}^{\frac{\alpha }{2}}}} \right) \right)  ,
\end{align}
where ${\Omega _2}\left( x \right) = \frac{{{a_{far}}}}{{2{a_{near}}}}\left( {x + 1} \right)$ and ${\Phi}\left( x \right) = \frac{{{\sigma ^2}x}}{{{P_t}d_{BR}^{ - \alpha }\theta \left( {{a_{far}} - {a_{near}}x} \right)}}$.
\end{theorem}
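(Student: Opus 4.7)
The plan is to separate the randomness of the small-scale Gamma fading $|h_j|^2\sim\Gamma(k,\theta)$ from that of the uniform deployment $d_{far}$ on the annulus, handle the fading first, and then integrate over distance afterwards. First I would apply the standard tail-integral identity $\mathbb{E}[\ln(1+X)]=\int_0^\infty \frac{1-F_X(x)}{1+x}\,dx$ to the conditional rate $\mathbb{E}_h[\log_2(1+\gamma_{far})\mid d_{far}=y]$. Because $\gamma_{far}=\frac{a_{far}Z}{a_{near}Z+1}$ with $Z\geq 0$, the SINR is strictly bounded above by $a_{far}/a_{near}$, so the tail integral truncates at this ceiling. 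Inverting the monotone map $z\mapsto\frac{a_{far}z}{a_{near}z+1}$ and using the Gamma CDF of $|h_j|^2$ yields the conditional CCDF $P(\gamma_{far}>x\mid d_{far}=y)=\Gamma(k,\Phi(x)(y^2+H^2)^{\alpha/2})/\Gamma(k)$, with $\Phi$ exactly as defined in the statement.

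Next, exploiting that for integer shape parameter $k$ the upper incomplete gamma function admits the finite expansion $\Gamma(k,z)=(k-1)!\,e^{-z}\sum_{n=0}^{k-1}z^n/n!$, I would substitute this sum and then average over $d_{far}$ with density $2y/(R_2^2-R_1^2)$, swapping the order of integration so that the distance integral lives inside the $x$-integral. The inner integral has the form $\int_{R_1}^{R_2} e^{-\Phi(x)(y^2+H^2)^{\alpha/2}}\bigl[\Phi(x)(y^2+H^2)^{\alpha/2}\bigr]^n (2y)\,dy$. Under the substitution $t=(y^2+H^2)^{\alpha/2}$, one has $2y\,dy=\frac{2}{\alpha}t^{2/\alpha-1}dt$, and a further rescaling $s=\Phi(x)t$ collapses this to $\frac{2\Phi(x)^{-2/\alpha}}{\alpha}\bigl[\gamma(n+2/\alpha,\Phi(x)(R_2^2+H^2)^{\alpha/2})-\gamma(n+2/\alpha,\Phi(x)(R_1^2+H^2)^{\alpha/2})\bigr]$, i.e.\ precisely the incomplete-gamma difference appearing in the theorem.

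The main obstacle is the remaining outer integral over $x\in[0,a_{far}/a_{near}]$, whose integrand carries $\Phi(x)^{-2/\alpha}/(1+x)$ multiplied by the incomplete-gamma difference; this admits no elementary closed form because $\Phi$ couples $x$ nonlinearly inside the arguments of $\gamma(\cdot,\cdot)$. I would bypass it by applying the linear change of variables $x=\Omega_2(t)=\frac{a_{far}}{2a_{near}}(t+1)$ that maps $[-1,1]$ onto $[0,a_{far}/a_{near}]$ and invoking the Chebyshev--Gauss quadrature rule $\int_{-1}^{1} f(t)\,dt\approx\sum_{i=1}^M \omega_i\sqrt{1-\varepsilon_i^2}\,f(\varepsilon_i)$ with the nodes $\varepsilon_i$ and weights $\omega_i=\pi/M$ declared in the statement. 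Collecting the Jacobian $\frac{a_{far}}{2a_{near}}$ from this substitution, the factor $\frac{2}{\alpha(R_2^2-R_1^2)}$ from the distance integral, and the $1/\ln 2$ from converting $\log_2$ to $\ln$, the constants combine to $\frac{a_{far}}{a_{near}(R_2^2-R_1^2)\alpha\ln 2}$, reproducing exactly the expression claimed in Theorem~\ref{theorem2}.
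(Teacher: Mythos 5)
Your proposal is correct and follows essentially the same route as the paper's Appendix~B: the tail-integral identity for the ergodic rate, the Gamma CCDF with the SINR ceiling $a_{far}/a_{near}$ truncating the outer integral, the finite series expansion of $\Gamma(k,z)$ for integer $k$, an explicit evaluation of the distance integral as a difference of lower incomplete gamma functions (your substitution $t=(y^2+H^2)^{\alpha/2}$ reproduces what the paper obtains from the Gradshteyn--Ryzhik table entry), and finally Chebyshev--Gauss quadrature via the map $\Omega_2$. Your bookkeeping of the Jacobian and prefactors matches the stated constant $\frac{a_{far}}{a_{near}(R_2^2-R_1^2)\alpha\ln 2}$ exactly.
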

\vspace{0.2cm}
\begin{proof}
\emph{See Appendix~B.}
\end{proof}

\subsection{Approximated analysis on ergodic rate}
In this subsection, we investigate the approximated ergodic rates in high SNR region ($\rho  = {{{P_t}} \mathord{\left/{\vphantom {{{P_t}} {{\sigma ^2} \to }}} \right.\kern-\nulldelimiterspace} {{\sigma ^2} \to }}\infty$) as well as the high SNR slopes.

\vspace{0.1cm}
\subsubsection{Approximated ergodic rates}
\begin{corollary}
In approximated analysis,  we assume the SNR $\rho  = {{{P_t}} \mathord{\left/{\vphantom {{{P_t}} {{\sigma ^2} \to }}} \right.\kern-\nulldelimiterspace} {{\sigma ^2} \to }}\infty$. The closed-form approximated ergodic rates for the near is expressed as
\setlength{\abovedisplayskip}{3pt}
\setlength{\belowdisplayskip}{3pt}
\begin{align}
R_{near}^\infty  = I_1^\infty  + I_2^\infty  + I_3^\infty  + I_4^\infty 
\end{align}
where $I_1^\infty$, $I_2^\infty$, $I_3^\infty$ and $I_4^\infty$ are defined as
\begin{align}
{I_1^\infty} &= \frac{{\ln \left( {\Xi  + 1} \right)}}{{{R_1}\ln 2}}\sum\limits_{i = 1}^M {\omega _i}\sqrt {1 - {\varepsilon _i}^2} {\Omega _1}\left( {{\varepsilon _i}} \right) \notag\\
&~~~\times {\left( {1 - \frac{{{\Xi ^k}\Psi^k\left( {{\Omega _1}\left( {{\varepsilon _i}} \right)} \right)}}{{k!}}} \right)} ,\\
{I_2^\infty} &=  - \frac{1}{{{R_1}\ln 2}}\sum\limits_{i = 1}^M {\omega _i}\sqrt {1 - {\varepsilon _i}^2} {\Omega _1}\left( {{\varepsilon _i}} \right)\left( {1 + \Psi \left( {{\Omega _1}\left( {{\varepsilon _i}} \right)} \right)} \right)\notag\\
&~~~\times \left( {{\text{ln}}\left( {\left( {\Xi  + 1} \right)\Psi \left( {{\Omega _1}\left( {{\varepsilon _i}} \right)} \right)} \right) + C} \right),\\
{I_3^\infty} &= \frac{1}{{{R_1}\ln 2}}\sum\limits_{i = 1}^M \sum\limits_{n = 1}^{k - 1} {\omega _i}\sqrt {1 - {\varepsilon _i}^2} {\Omega _1}\left( {{\varepsilon _i}} \right)\frac{{{\Psi ^n}\left( {{\Omega _1}\left( {{\varepsilon _i}} \right)} \right)}}{{n!}} \notag\\
&~~~\times {{\left( { - 1} \right)}^{n + 1}}{\left( {1 + \Psi \left( {{\Omega _1}\left( {{\varepsilon _i}} \right)} \right)} \right)} \notag\\
&~~~\times {\left( {{\text{ln}}\left( {\left( {\Xi  + 1} \right)\Psi \left( {{\Omega _1}\left( {{\varepsilon _i}} \right)} \right)} \right) + C} \right)},\\
{I_4^\infty} &= \frac{1}{{{R_1}\ln 2}}\sum\limits_{i = 1}^M \sum\limits_{n = 1}^{k - 1} \sum\limits_{j = 1}^n {\omega _i}\sqrt {1 - {\varepsilon _i}^2} {\Omega _1}\left( {{\varepsilon _i}} \right)\frac{{{\Psi ^n}\left( {{\Omega _1}\left( {{\varepsilon _i}} \right)} \right)}}{{n!}}\notag\\
&~~~\times {\left( {1 + \Psi \left( {{\Omega _1}\left( {{\varepsilon _i}} \right)} \right)} \right)} C_n^j{{\left( { - 1} \right)}^{n - j}}{{\left( {\Psi \left( {{\Omega _1}\left( {{\varepsilon _i}} \right)} \right)} \right)}^{ - j}} \notag\\
&~~~\times \Gamma \left( {j,\left( {\Xi  + 1} \right)\Psi \left( {{\Omega _1}\left( {{\varepsilon _i}} \right)} \right)} \right),
\end{align}
where $C \approx 0.57721$ is the Euler constant.
\end{corollary}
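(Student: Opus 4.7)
The plan is to start from the closed-form expressions in Theorem~\ref{theorem1} and perform a term-by-term asymptotic expansion as $\rho=P_t/\sigma^2\to\infty$. The key observation is that in this regime $\Psi(\Omega_1(\varepsilon_i))$ is proportional to $1/\rho$ and therefore vanishes uniformly in $i$. Consequently every occurrence of $\Psi$ (and of its multiples such as $\Xi\Psi$ and $(\Xi+1)\Psi$) is small, and the special-function factors inside $I_1,\dots,I_4$ admit standard small-argument expansions. The Gauss--Chebyshev structure (the weights $\omega_i$, nodes $\varepsilon_i$, and the factor $\sqrt{1-\varepsilon_i^{\,2}}\,\Omega_1(\varepsilon_i)$) is unaffected by the limit and carries through verbatim, so the work reduces to approximating the $\Psi$-dependent factors inside each summand.

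For $I_1$, I would apply the Taylor series $\gamma(k,x)=\sum_{n=0}^{\infty}\frac{(-1)^n x^{k+n}}{n!(k+n)}$ and retain only the leading term, giving $\gamma(k,\Xi\Psi)\approx (\Xi\Psi)^k/k$. Dividing by $\Gamma(k)$ and using $k\,\Gamma(k)=\Gamma(k+1)=k!$ (under the convention $k!\equiv\Gamma(k+1)$ adopted in the statement) yields exactly the bracketed factor in $I_1^\infty$. For $I_2$ and $I_3$, the two expansions needed are $\exp(\Psi)\approx 1+\Psi$ and the small-argument form $\text{Ei}(-x)=C+\ln x+O(x)$ as $x\downarrow 0$, obtained from the series $\text{Ei}(x)=C+\ln|x|+\sum_{n\ge1}x^n/(n\cdot n!)$. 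Substituting these with $x=(\Xi+1)\Psi(\Omega_1(\varepsilon_i))$ directly reproduces $I_2^\infty$ and $I_3^\infty$ line for line. For $I_4$ the same $\exp(\Psi)\approx 1+\Psi$ substitution is made, but the upper incomplete gamma factor $\Gamma(j,(\Xi+1)\Psi)$ is deliberately kept unapproximated; this yields $I_4^\infty$ as stated.

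The main technical subtlety is that $\text{Ei}(-x)$ is logarithmically singular at $x=0$, so the contributions from $I_2$ and $I_3$ do not vanish but instead grow like $\ln\rho$, which is exactly what generates the non-zero high-SNR slope studied in the next subsection. This is why the two-term form $C+\ln(\cdot)$ must be retained rather than simply dropping the exponential integral. For the same reason, using $\exp(\Psi)\approx 1+\Psi$ in place of the cruder $\exp(\Psi)\approx 1$ is essential: the $\Psi\cdot\ln\Psi$ cross term lies at the same order as the leading $\ln\rho$ behaviour and must be kept to recover the correct constant offset, which is precisely why the factor $1+\Psi(\Omega_1(\varepsilon_i))$ appears in $I_2^\infty$, $I_3^\infty$, and, via the $\exp(\Psi)$ substitution, in $I_4^\infty$. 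A final consistency check is that the neglected higher-order terms in each special-function expansion contribute $o(1)$ as $\rho\to\infty$, so that $R_{near}-R_{near}^\infty\to 0$; this follows because each remainder picks up an extra factor of $\Psi$ or $\Psi\ln\Psi$ while the preserved summand is at most $O(\ln\rho)$, and the same bound is uniform in $\varepsilon_i\in[-1,1]$ so the quadrature sum preserves the asymptotic.
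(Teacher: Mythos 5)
Your proposal is correct and follows essentially the same route as the paper, whose entire proof consists of invoking the small-argument approximations $\exp(-x)\approx 1-x$, $\mathrm{Ei}(-x)\approx \ln(x)+C$ and $\gamma(k,x)\approx x^{k}/k$ in the closed forms of Theorem~1 (your $\exp(\Psi)\approx 1+\Psi$ is the same first expansion, your $\gamma(k,\Xi\Psi)/\Gamma(k)\approx(\Xi\Psi)^{k}/k!$ the third, and leaving $\Gamma(j,(\Xi+1)\Psi)$ untouched in $I_4^{\infty}$ matches the stated result). Your additional remarks --- that the logarithmic singularity of $\mathrm{Ei}$ produces the $\ln\rho$ growth behind the high-SNR slope, and that the $\Psi\ln\Psi$ cross term forces the two-term $1+\Psi$ expansion --- are sound justifications the paper leaves implicit, but they do not constitute a different method.
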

\begin{proof}
\emph{When $x \to 0$, the approximated expressions: $\exp \left( { - x} \right) \approx 1 - x$, ${\text{Ei}}\left( { - x} \right) \approx \ln \left( x \right) + C$ and $\gamma \left( {k,x} \right) \approx \frac{{x^k}}{k}$, can be utilized \cite{integration}.}
\end{proof}

\vspace{0.2cm}
\begin{corollary}
When $\rho \to \infty$, the approximated expression of the ergodic rate for the far user is expressed as
\vspace{-0.2cm}
\begin{align}
R_{far}^\infty  &= \frac{{{a_{far}}}}{{2{a_{near}}\ln 2}}\sum\limits_{i = 1}^M {\omega _i}\sqrt {1 - {\varepsilon _i}^2} \notag\\
&~~~ \times \left( {\left( {\frac{1}{{{\Omega _2}\left( {{\varepsilon _i}} \right) + 1}}} \right) - \delta\Phi^k\left( {{\Omega _2}\left( {{\varepsilon _i}} \right)} \right)} \right),
\end{align}
where $\delta = \frac{{2\left( {{{\left( {{R_2}^2 + {H^2}} \right)}^{\frac{{\alpha k}}{2} + 1}} - {{\left( {{R_1}^2 + {H^2}} \right)}^{\frac{{\alpha k}}{2} + 1}}} \right)}}{{k!\left( {R_2^2 - R_1^2} \right)\left( {\alpha k + 2} \right)}}$.
\end{corollary}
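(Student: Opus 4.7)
The plan is to work from the complementary-CDF representation of the rate, $R_{far} = \frac{1}{\ln 2}\int_0^{a_{far}/a_{near}} \frac{\bar{F}_{\gamma_{far}}(t)}{1+t}\,dt$, which is the same starting point used to derive Theorem~\ref{theorem2}. Conditioning on the far-user horizontal distance $d_{far}$, the gamma-fitted composite channel gives $\bar{F}_{\gamma_{far}}(t)\mid d_{far} = 1 - \gamma(k,\Phi(t)(d_{far}^2+H^2)^{\alpha/2})/\Gamma(k)$. Attacking the corollary at this CDF level, rather than asymptotically expanding each summand of Theorem~\ref{theorem2}, is what delivers the clean two-term $\Phi^{0}+\Phi^{k}$ structure of the statement.

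For $\rho\to\infty$ the argument of $\gamma(k,\cdot)$ vanishes at every $t$ bounded away from $a_{far}/a_{near}$, so the same small-$x$ approximation $\gamma(k,x)\approx x^{k}/k$ (with $\Gamma(k)=(k-1)!$) that underpins Corollary~1 gives $\bar{F}_{\gamma_{far}}(t)\mid d_{far} \approx 1 - (\Phi(t)(d_{far}^2+H^2)^{\alpha/2})^{k}/k!$. Substituting this, interchanging the $d_{far}$-expectation with the $t$-integral, and separating the constant from the $\Phi^{k}$ piece produces
\begin{equation*}
R_{far}^{\infty} \approx \frac{1}{\ln 2}\int_0^{a_{far}/a_{near}} \frac{dt}{1+t} - \frac{\mathbb{E}[(d_{far}^2+H^2)^{\alpha k/2}]}{k!\ln 2}\int_0^{a_{far}/a_{near}} \frac{\Phi^{k}(t)}{1+t}\,dt.
\end{equation*}

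The location expectation is elementary: using the deployment PDF $f_{d_{far}}(y) = 2y/(R_2^2-R_1^2)$ on $[R_1,R_2]$ and the substitution $u = y^{2}+H^{2}$,
\begin{equation*}
\mathbb{E}[(d_{far}^2+H^2)^{\alpha k/2}] = \frac{2[(R_2^{2}+H^{2})^{\alpha k/2+1}-(R_1^{2}+H^{2})^{\alpha k/2+1}]}{(R_2^{2}-R_1^{2})(\alpha k+2)} = k!\,\delta,
\end{equation*}
reproducing exactly the $\delta$ of the statement. Both remaining $t$-integrals are then handled by the change of variable $t = \Omega_2(\varepsilon) = \frac{a_{far}}{2a_{near}}(\varepsilon+1)$ (Jacobian $\frac{a_{far}}{2a_{near}}$) followed by the Chebyshev--Gauss quadrature $\int_{-1}^{1} g(\varepsilon)\,d\varepsilon \approx \sum_{i=1}^{M} \omega_i\sqrt{1-\varepsilon_i^{2}}\,g(\varepsilon_i)$ already used in Theorem~\ref{theorem2}; this reassembles the prefactor $a_{far}/(2a_{near}\ln 2)$ together with the summand $(1+\Omega_2(\varepsilon_i))^{-1} - \delta\,\Phi^{k}(\Omega_2(\varepsilon_i))$.

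The step needing the most care is the small-$\Phi$ replacement itself, because $\Phi(t)\to\infty$ as $t\uparrow a_{far}/a_{near}$ and the pointwise approximation $\gamma(k,\Phi(t)(\cdot))\approx (\Phi(t)(\cdot))^{k}/k!$ breaks down near that endpoint. Since the Chebyshev nodes $\Omega_2(\varepsilon_i)$ sit strictly inside $(0,a_{far}/a_{near})$ for every $M$ the final summation is always finite, but showing that it asymptotically matches the true high-SNR rate requires verifying that the contribution from the shrinking boundary neighbourhood where the pointwise approximation fails is $o(\Phi^{k})$ as $\rho\to\infty$. Everything else is bookkeeping within the quadrature framework already in place for Theorem~\ref{theorem2}.
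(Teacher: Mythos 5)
Your route is the paper's route: the paper's entire proof is ``substitute $\gamma(k,x)\approx x^k/k$ into (B.2) and apply some manipulations,'' and your reading of that instruction at the CCDF level --- replacing $1-\gamma\left(k,\Phi(t)(y^2+H^2)^{\alpha/2}\right)/\Gamma(k)$ by $1-\Phi^k(t)\,(y^2+H^2)^{\alpha k/2}/k!$ in the first line of (B.2) --- is the right one, since applying the small-argument asymptotic term-by-term to the shape-$(n+2/\alpha)$ incomplete gammas in the final expression of Theorem~2 would generate corrections at every order $\Phi^n$, not the single $\Phi^k$ term of the statement. Your location expectation $\mathbb{E}\left[(d_{far}^2+H^2)^{\alpha k/2}\right]=k!\,\delta$ matches the paper's $\delta$ exactly, and your endpoint remark (that $\Phi(t)\to\infty$ as $t\uparrow a_{far}/a_{near}$, so the pointwise expansion is not uniform near the upper limit) is a genuine point of care that the paper's one-line proof never raises.

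There is, however, a concrete gap at your last step. Applying the stated Chebyshev--Gauss rule $\int_{-1}^{1}g(\varepsilon)\,d\varepsilon\approx\sum_{i=1}^{M}\omega_i\sqrt{1-\varepsilon_i^2}\,g(\varepsilon_i)$ to your own penultimate display yields the summand $\bigl(1-\delta\,\Phi^k(\Omega_2(\varepsilon_i))\bigr)/\bigl(1+\Omega_2(\varepsilon_i)\bigr)$: the $\Phi^k$ correction inherits the weight $1/(1+t)$ from the integrand $\Phi^k(t)/(1+t)$. The corollary's summand, $\left(1+\Omega_2(\varepsilon_i)\right)^{-1}-\delta\,\Phi^k(\Omega_2(\varepsilon_i))$, carries no such weight on the second term, and since $1/(1+t)$ ranges from $1$ down to $a_{near}$ over the integration interval, the two expressions differ by a factor bounded away from $1$ at the same asymptotic order --- they are not interchangeable. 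Your assertion that the quadrature ``reassembles'' the stated summand therefore does not follow from what you derived: either the corollary as displayed has dropped the factor $\left(1+\Omega_2(\varepsilon_i)\right)^{-1}$ on the $\delta\,\Phi^k$ term (in which case your derivation, carried through faithfully, would correct it), or some additional unstated manipulation is needed, and your write-up must say which. Everything upstream of this final assembly --- the CCDF substitution, the interchange of the $d_{far}$-average with the $t$-integral, and the evaluation of $\delta$ --- is sound and coincides with the paper's intended argument.
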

\vspace{0.1cm}
\begin{proof}
\emph{The result is obtained by substituting the asymptotic expression $\gamma \left( {k,x} \right) \approx \frac{{x^k}}{k}$ into (B.2) and applying some manipulations.}
\end{proof}

\vspace{0.2cm}
\subsubsection{High SNR slopes}
Furthermore, we evaluate the high SNR slope which is a key parameter determining the scaling law of the ergodic rate in high SNR region.
\begin{remark}\label{remark1}
Base on the approximated ergodic rates in the hign SNR region, the high SNR slopes of the near users and the far users are calculated as
\begin{align}
&S_{near} = \mathop {\lim }\limits_{\rho  \to \infty } \frac{{{R_{near}^\infty \left( \rho  \right) }}}{{\log \left( \rho  \right)}}=1,\\ 
&S_{far} = \mathop {\lim }\limits_{\rho  \to \infty } \frac{{{R_{far}^\infty\left( \rho  \right) }}}{{\log \left( \rho  \right)}}=0.
\end{align}
\end{remark}

\begin{remark}
As can be seen from (20) and (21), the high SNR slopes are fixed as constants, illustrating that the scaling rates of the ergodic rates in high SNR region are not affected by the number of STAR-RIS elements. 
\end{remark} 

\section{Numerical Results}
In this section, numerical simulation results are provided to verify the derived analytical results. The simulation is upon the original multiple i. i. d. Nakagami-$m$ fading channels. Without otherwise specification, the simulation parameter settings are defined as follows. A three-dimensional coordinate system with the center of the area as the origin and the units in meters is used to describe the location of each node: the BS is located at $(400, 0, 0)$ and the STAR-RIS is at $(0, 0, 30)$. The radii of the inner and outer circles are $R_1=100$ m and $R_2=200$ m. The noise power is ${\sigma ^2} = -170 + 10\log \left( {BW} \right) + {N_f} = -90$ dB, where the bandwidth $BW$ is $10$ MHz and the noise figure $N_f$ is 10 dB. The path loss exponent $\alpha$ equals to 2.6. The power allocation coefficients are $\alpha_{near}=0.3$ and $\alpha_{far}=0.7$. The SIC thresholds is $\gamma _{th}^{SIC} = 1$. The numbers of STAR-RIS elements are defined as $N=\left\{30, 50, 70\right\}$ and the parameters of the i.i.d. Nakagami-$m$ fading channels are $\left( {m,\Omega } \right) = \left( {2, 1} \right)$. Hence, base on (\ref{equ2}), the parameters of the fitting gamma distributions used in the analytical expressions are calculated as $\left( {k,\theta } \right) = \left\{ {\left( {3, 14} \right), \left( {5, 23.4} \right), \left( {7, 32.8} \right)} \right\}$.

We also perform the simulation for the spatially correlated channels to gain a better insight of the performance difference between the systems with the independent channels and the correlated channels. With other parameter settings same as above, each STAR-RIS element has a square shape with the side length ${\lambda  \mathord{\left/
 {\vphantom {\lambda  4}} \right.
 \kern-\nulldelimiterspace} 4}$, where $\lambda$ is the signal wavelength. The elements are deployed edge-to-edge in a 2 dimensional array: $5 \times 6$ for $N=30$, $5 \times 10$ for $N=50$ and $7 \times 10$ for $N=70$.

In order to compare the STAR-RIS with the conventional RIS, we simulate the ergodic rates of a RIS aided NOMA system for the same scenario. Since the conventional RIS can only reflect the signal, both the BS and the users have to be on the same side of the RIS. Thus the RIS is set at $(-200, 0, 50)$, the edge opposite the BS. The other settings are the same as the settings of the STAR-RIS.

Fig. \ref{Fig2} and Fig. \ref{Fig3} show the ergodic rates with the unit as bit per channel use (BPCU) versus the transmit SNR in dB with different $k$, which indicates that the analytical results match the simulation results well. Based on the two figures, it can be observed that 1) the ergodic rates positively correlate with the shape parameter of the fitted gamma distribution, that is, the number of STAR-RIS elements; 
\begin{figure}[!htb]
\centering
\includegraphics[width= 3in]{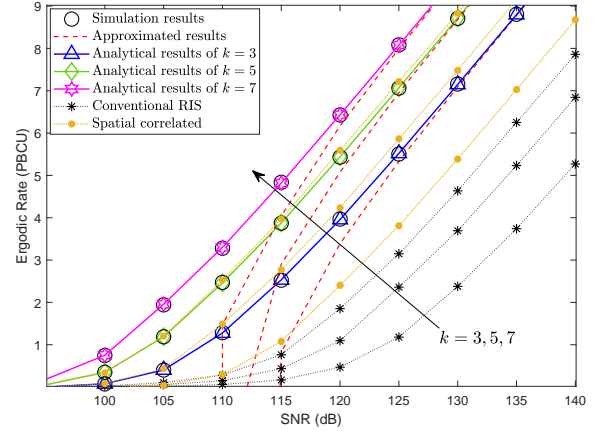}
\caption{Ergodic rates of near user versus transmit SNR, with $k=\left\{3, 5, 7\right\}$.}
\label{Fig2}
\end{figure}
\begin{figure}[!htb]
\centering
\includegraphics[width= 3.05in]{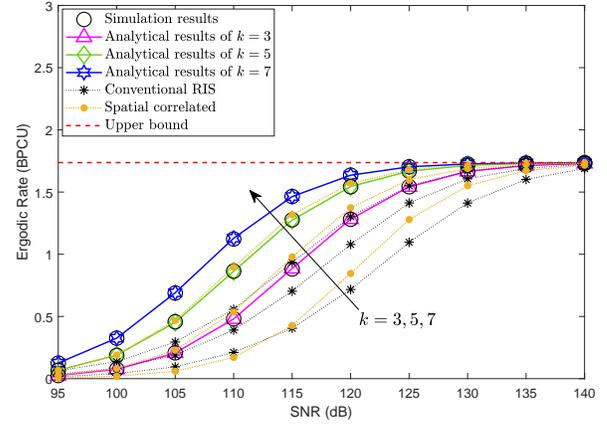}
\caption{Ergodic rates of far user versus transmit SNR, with $k=\left\{3, 5, 7\right\}$.}
\label{Fig3}
\end{figure}
2) in the high SNR region, the ergodic rates of the near user increase with the same rate for different $k$ and the far users converge to the same upper bound determined by the ratio $\frac{{{a_{far}}}}{{{a_{near}}}}$, which is consistent with \textbf{Remark \ref{remark1}} and \textbf{2}; 3) the ergodic rates of the system utilizing a STAR-RIS are obviously higher than utilizing a conventional RIS; 4) the ergodic rates for the uncorrelated channels outperform the correlated channels, but they have similar trends and their difference decreases as the number of the elements increases.

\section{Conclusions}
This letter has investigated the ergodic rates of STAR-RIS aided NOMA systems over Nakagami-$m$ fading channels for cell-edge users without LoS communication links from BSs. By fitting the distribution of the composite channel power gain to a gamma distribution, we have derived the closed-form expression of the ergodic rates as well as the high SNR slopes of the NOMA users. The numerical results indicate that: 1) the analytical results match the simulation results well and the approximated ergodic rates match the simulation results in the high SNR region; 2) the ergodic rates of the users can be improved by increasing the number of STAR-RIS elements; 3) the high SNR slopes of the users are fixed as constants, which are irrelevant to the number of STAR-RIS elements; 4) the performance of the STAR-RIS aided NOMA system is better than the RIS aided NOMA system.

\section*{Appendix~A: Proof of Theorem 1} \label{Appendix:A}
\renewcommand{\theequation}{A.\arabic{equation}}
\setcounter{equation}{0}
By applying some manipulates, (6) can be derived as
\begin{align}
    {R_{near}} &=\frac{U}{{\ln 2}}\int_0^\infty {\log \left( {1 + {\gamma _{near}}} \right){f_{{\gamma _{near}}}}\left( x \right)} dx \notag \\
    &= \frac{U}{{\ln 2}}\int_0^\infty  {\log \left( {1 + {\gamma _{near}}} \right)} d\left[ {{F_{\gamma _{near}}}\left( x \right) - 1} \right] \notag \\ 
    &\mathop= \limits^{\left( a \right)} \frac{U}{{\ln 2}}\left( {\log \left( {1 + {\gamma _{near}}} \right)\left[ {{F_{\gamma _{near}}}\left( \infty  \right) - 1} \right]} \right) \notag \\
    &~~~- \frac{U}{{\ln 2}}\int_0^\infty  {\frac{{{F_{{\gamma _{near}}}}\left( x \right) - 1}}{{1 + x}}} dx \notag \\ 
    &\mathop= \limits^{\left( b \right)} \frac{U}{{\ln 2}}\int_0^\infty  {\frac{{1 - {F_{{\gamma _{near}}}}\left( x \right)}}{{1 + x}}} dx,
\end{align}
where $U=u\left( {\gamma _{near}^{SIC} - \gamma _{th}^{SIC}} \right)$, $F_{\gamma _{near}}$ denotes the cumulative distribution function (CDF) of $\gamma _{near}$, $(a)$ is obtained by integration-by-parts method, and $(b)$ is obtained by ${F_{\gamma _{near}}}\left( \infty  \right)=1$.

Substituting (3) and (4) into (A.1), the ergodic rate of the near user is expressed as
\begin{align}
&{R_{near}} = \underbrace {\int_0^\Xi  {\int_0^{{R_1}} {\frac{{1 - {F_{{{\left| {{h_k}} \right|}^2}}}\left( {\Xi {\Psi}\left( y \right)} \right)}}{{1 + x}}{{f_{{d_{near}}}}\left( y \right)}dydx} } }_{{I_1}}\notag\\ 
&~~~+ \underbrace {\int_\Xi ^\infty  {\int_0^{{R_1}} {\frac{{1 - {F_{{{\left| {{h_k}} \right|}^2}}}\left( {{\Psi}\left( y \right)x} \right)}}{{1 + x}}{{f_{{d_{near}}}}\left( y \right)}dydx} } }_{{J_1}}.    
\end{align}
Since ${\left| {{h_k}} \right|}^2$ is a gamma distribution, $I_1$ can be calculated as
\begin{align}
{I_1} = \frac{{2\ln \left( {\Xi  + 1} \right)}}{{R_1^2\ln 2}}\int_0^{{R_1}} {\left( {1 - \frac{{\gamma \left( {k,\Xi {\Psi}\left( y \right)} \right)}}{{\Gamma \left( k \right)}}} \right)ydy}. 
\end{align}

As the representation of the lower incomplete gamma function $\gamma \left( {k,x} \right) = \Gamma \left( k \right)\left( {1 - \exp \left( { - x} \right)\sum\limits_{n = 1}^{k-1} {\frac{{{x^n}}}{{n!}}} } \right)$ is valid when $k$ is a positive integer [10, eq. (8.352.6)], $J_1$ can be derived as
\begin{align}
{J_1} \!&= \!\frac{2}{{R_1^2\ln 2}}\sum\limits_{n = 0}^{k - 1} {\frac{1}{{n!}}} \!\!\int_0^{{R_1}} \!\!y\Psi^n\left( y \right) \!\!\int_\Xi ^\infty  \!{\frac{{{x^n}\exp \left( { - {\Psi}\left( y \right)x} \right)}}{{1 + x}}} dxdy .
\end{align}

$J_1$ can be further calculated by dividing it into three parts: $J_1=I_2+I_3+I_4$, where $I_2$, $I_3$ and $I_4$ are denoted as
\begin{align}
{I_2} &= \frac{2}{{R_1^2\ln 2}}\int_0^{{R_1}} y\underbrace {\int_\Xi ^\infty  {\frac{{\exp \left( { - {\Psi}\left( y \right)x} \right)}}{{1 + x}}} dx} _{{J_2}}dy,\\
{I_3} &= \frac{2}{{R_1^2\ln 2}}\sum\limits_{n = 1}^{k - 1} {\frac{{{{\left( { - 1} \right)}^n}}}{{n!}}} \int_0^{{R_1}} y\Psi^n\left( y \right)\exp \left( {{\Psi}\left( y \right)} \right) \notag\\
&~~~~\times\underbrace{\int_{\Xi  + 1}^\infty  {\frac{{\exp \left( { - {\Psi}\left( y \right)x} \right)}}{x}} dx} _{{J_3}}dy, \\
{I_4} &= \frac{2}{{R_1^2\ln 2}}\sum\limits_{n = 1}^{k - 1} {\sum\limits_{j = 1}^n {\frac{{C_n^j{{\left( { - 1} \right)}^{n - j}}}}{{n!}}} } \int_0^{{R_1}} y\Psi^n\left( y \right)\exp \left( {{\Psi}\left( y \right)} \right) \notag\\
&~~~~\times\underbrace{\int_{\Xi  + 1}^\infty  {{x^{j-1}}\exp \left( { - {\Psi}\left( y \right)x} \right)} dx} _{{J_4}}dy.
\end{align}
With the aid of [10, eq. (3.352.2) and (3.351.2)], the integrals $J_2$, $J_3$ and $J_4$ can be calculated. 

The final solution can be then obtained by applying the Chebyshev-Gauss-quadrature to $I_1$, $I_2$, $I_3$ and $I_4$, which defined as $\int_{ - 1}^1 {\frac{{f\left( x \right)}}{{\sqrt {1 - {x^2}} }}} dx \approx \sum\limits_{i = 1}^M {{\omega _i}f\left( {{x_i}} \right)}$, where ${x_i} = \cos \left( {\frac{{2i - 1}}{{2M}}\pi } \right)$ and ${\omega _i} = \frac{\pi }{M}$.

\section*{Appendix~B: Proof of Theorem 2} \label{Appendix:B}
\renewcommand{\theequation}{B.\arabic{equation}}
\setcounter{equation}{0}
Similar to (A.1), (7) can be rewritten as
\begin{align}
  R_{far}=\frac{1}{{\ln 2}}\int_0^\infty  {\frac{{1 - {F_{{\gamma _{far}}}}\left( x \right)}}{{1 + x}}} dx.
\end{align}

Substituting (5) into (B.1), the ergodic rate of the far users can be derived as
\begin{align}
{R_{far}} \!&= \!\frac{1}{{\ln 2}}\int_0^{\frac{a_{far}}{a_{near}}} \!\!\!\int_{{R_1}}^{{R_2}} \frac{1 - \frac{{\gamma \left( {k,\frac{{\Phi \left( x \right)}}{{({y^2} + {H^2})}^{ - \frac{\alpha }{2}}}} \right)}}{\Gamma \left( k \right)}}{{1 + x}}{f_{d_{far}}}\left( y \right) dy dx \notag\\
&= \frac{2}{{\left( {R_2^2 - R_1^2} \right)\ln 2}}\sum\limits_{n = 0}^{k - 1} {\frac{1}{{n!}}} \int_0^{\frac{{{a_{far}}}}{{{a_{near}}}}} \frac{{{\Phi ^n}\left( x \right)}}{{1 + x}}\notag\\
&\!\!\!\!\!\!\!\!\!\!\times \underbrace {{{\int_{{R_1}}^{{R_2}} {\left( {{{({y^2} + {H^2})}^{\frac{\alpha }{2}}}} \right)} }^n}\exp \left( { - \Phi \left( x \right){{({y^2} + {H^2})}^{\frac{\alpha }{2}}}} \right) ydy} _{{J_5}}dx.
\end{align}

With the help of [10, eq. (3.351.1)], $J_5$ can be calculated. We then employ the Chebyshev-Gauss-quadrature to obtain the final result.

\bibliographystyle{IEEEtran}
\bibliography{mybib}

\end{document}